\newtheorem{theorem}{\bf Theorem}
\newtheorem{definition}{\bf Definition}
\newtheorem{remark}{\bf Remark}
\begin{document}
\title{\huge Load Shifting in the Smart Grid: To Participate or Not?\vspace{-0.2cm}}

\author{\authorblockN{Yunpeng Wang$^{1,2}$, Walid Saad$^{3,4}$, Narayan B. Mandayam$^5$, H. Vincent Poor$^6$}\\ \authorblockA{\small
$^1$ Electrical and Computer Engineering Department, University of Miami, Coral Gables, FL\\
$^2$ College of Computer and Software, Nanjing University of Information Science and Technology, China\\
$^3$ Wireless@VT, Bradley Department of Electrical and Computer Engineering,  Virginia Tech, Blacksburg, VA\\
$^4$ Department of Computer Engineering, Kyung Hee University, South Korea\\
$^5$ WINLAB, Dept. of ECE, Rutgers University, North Brunswick, NJ\\
$^6$ Electrical Engineering Department, Princeton University, Princeton, NJ\\
Email: \url{y.wang68@umiami.edu}, \url{walids@vt.edu},  \url{narayan@winlab.rutgers.edu}, \url{poor@princeton.edu}
\vspace{-0.9cm}
}%

\thanks{This research was supported by the U.S. National Science Foundation under Grants CNS-1446621, ECCS-1549894, ECCS-1549900, ECCS-1549881, and NSFC-61232016. Dr. Saad was a corresponding author. }
 }
\date{}
\maketitle

\begin{abstract}
Demand-side management~(DSM) has emerged as an important smart grid feature that allows utility companies to maintain desirable grid loads. However, the success of DSM is contingent on active customer participation. Indeed, most existing DSM studies are based on game-theoretic models that assume customers will act rationally and will voluntarily participate in DSM. In contrast, in this paper, the impact of customers' subjective behavior on each other's DSM decisions is explicitly accounted for. In particular, a noncooperative game is formulated between grid customers in which each customer can decide on whether to participate in DSM or not. In this game, customers seek to minimize a cost function that reflects their total payment for electricity. Unlike classical game-theoretic DSM studies which assume that customers are rational in their decision-making, a novel approach is proposed, based on the framework of prospect theory (PT), to explicitly incorporate the impact of customer behavior on DSM decisions. To solve the proposed game under both conventional game theory and PT, a new algorithm based on fictitious player is proposed using which the game will reach an $\epsilon$-mixed Nash equilibrium. Simulation results assess the impact of customer behavior on demand-side management. In particular, the overall participation level and grid load can depend significantly on the rationality level of the players and their risk aversion tendency.

\vspace{-0cm}
\end{abstract}
\begin{keywords}
Smart grid, game theory, prospect theory, demand side management.
\end{keywords}\vspace{-0.1cm}

\section{Introduction}\vspace{-0cm}
The use of demand-side management~(DSM) mechanisms has recently attracted significant attention in the smart grid literature. DSM schemes offer smart grid customers an opportunity to change their demands over time so as to reduce their overall electricity costs. From the utility company's perspective, such a shifting of demand can reduce the peak hour demand on the grid~\cite{hossain2012smart}. A successful implementation of DSM schemes requires customers to actively subscribe to the offers made by the utility company. However, recent empirical studies have shown that customers remain reluctant to subscribe to DSM mechanisms, despite the efforts of utility companies~\cite{FAHEY}. Therefore, it is important to develop a new generation of DSM mechanisms that can improve the penetration of the technology and thus accelerate the deployment of the smart grid.
 
There has been an abundant body of work dealing with DSM~\cite{mohsenian2010autonomous, atzeni2013noncooperative, QZ00, fadlullah2014gtes, chen2014autonomous, chai2014demand}. The authors in~\cite{mohsenian2010autonomous} proposed a distributed, game-theoretic DSM system, in which each user chooses a daily schedule of household appliances and loads to optimize the energy consumption. The work in~\cite{atzeni2013noncooperative} proposed different game-theoretic approaches to optimize a day-ahead DSM mechanism by using storage units. The overall value of implementing DSM and demand response schemes was studied in~\cite{QZ00} via a Stackelberg game formulation. The work in~\cite{fadlullah2014gtes} investigated how energy consumption may be optimized through a two-step centralized model, in which a power supplier provided consumers with an energy price parameter and consumption summary vector. In~\cite{chen2014autonomous}, the authors adopted an instantaneous load billing scheme so as to shift consumers' peak hour demand and to charge them fairly for their energy consumption. In~\cite{chai2014demand}, utility companies and residential users were modeled in two levels, which reduce demand variation and peak load.

Thus, the works in~\cite{mohsenian2010autonomous, atzeni2013noncooperative, QZ00, fadlullah2014gtes, chen2014autonomous, chai2014demand} study the various economic and optimization aspects of DSM and are representative of the majority of existing works in this area. However, most of those existing works assume that customers will act rationally and subscribe to DSM as long as their objective function can be optimized~\cite{mohsenian2010autonomous, QZ00, fadlullah2014gtes, chai2014demand, atzeni2013noncooperative, chen2014autonomous}. In practice, as observed by numerous real-world experimental studies~\cite{fiegenbaum1988attitudes, barberis2001prospect, levy1997prospect}, the process of individuals can deviate significantly from the rational premise of conventional game theory. This has been corroborated by the relatively low adoption rate of DSM in deployed smart grids~\cite{FAHEY}. In this respect, prospect theory (PT), a Nobel-prize winning theory developed by Kahneman and Tversky, provides the necessary tools to better understand real-world decision making and its deviation from rational behavior~\cite{kahneman1979prospect}. In particular, PT studies have shown that, in real life, users often have subjective behavior when faced with uncertainty of outcome, such as lottery outcomes. Also, it has been shown that customers may have subjective perceptions on how others behave and how this behavior impacts their gains or losses in economic-oriented scenarios. Since the price in DSM strongly depends on aggregate demands and thus on the participation of all customers, the overall perceptions of the customers on each other's DSM decisions will impact their behavior. PT has been widely used in the social sciences~\cite{harrison2009expected}. In addition some recent works~\cite{6310922, Mandayam, okuda2009design, poortoappear} have applied PT to wireless networks. However, to the best of our knowledge, beyond our early work in~\cite{PTICASSP} that studies a two-player storage management PT game, no existing work seems to have investigated how PT considerations impact participation in DSM.

The main contribution of this paper is to propose a new framework for DSM that accounts for realistic customer participation behavior using prospect theory. In particular, the DSM problem is cast as a noncooperative game between customers. In this game, each customer can decide whether or not to participate in a DSM program, while the utility company seeks to reduce the total peak hour demand so as to maintain a desirable target load on the grid. The proposed game uses PT to explicitly incorporate the customers' \emph{subjective perceptions} on DSM decisions and their impact on potential cost savings. Here, subjective perceptions pertain to the way in which each customer evaluates its electricity payment and how that depends on other customers' actions. In this respect, each customer is seeking to minimize a cost function that reflects its one-day electricity bill, under other customers' participation and its impact on the price.  

Compared to related works on DSM~\cite{mohsenian2010autonomous, atzeni2013noncooperative, QZ00, fadlullah2014gtes, chen2014autonomous, chai2014demand, saad2012game, 6787063, weaver2009game, coogan2013energy, cui2013game}, this paper brings forward novel ideas from PT in order to explicitly account for realistic customer behavior, which can differ significantly from the classical, rational path predicted by traditional game theory. To solve the game, under both PT and classical game theory, we propose a new learning algorithm that allows the customers to interact and reach an $\epsilon$-mixed Nash equilibrium. We prove the convergence of the algorithm and discuss the properties of the reached operating point. Extensive simulation results based on realistic data show that deviations from rational behavior can strongly impact the overall level of participation in DSM.

The remainder of the paper is organized as follows: Section~\ref{sec:sysmodel} presents the studied system model. In Section~\ref{sec:game}, we formulate the problem as a PT-based game. In Section~\ref{sec:algo}, we introduce the concept of fictitious play and describe our proposed algorithm. Simulation results are presented in Section~\ref{sec:sim}, while conclusions are drawn in Section~\ref{sec:conc}.\vspace{-0cm}

\section{System Model}\label{sec:sysmodel}
\subsection{Demand cost}\label{sec:dp}
Consider a smart grid consisting of a set $\mathcal{N}$ of customers in which each customer $i \in \mathcal{N}$ consumes a certain amount of energy per hour. All customers are offered the opportunity to participate in a demand side management scheme provided by the utility company. For customer $i$, we define an hourly \emph{energy demand scheduling vector} in line with existing works such as,
\begin{equation}\label{eq:x}
\boldsymbol{x}_i=[x_i^1, x_i^2,\ldots, x_i^H],
\end{equation}
where $H=24$ hours. For a certain hour $h \in \mathcal{H}=\{1,2,\ldots, H\}$, $x_i^h$ represents the energy demand of customer $i$. The total energy demand from all customers at time $h$ is thus
\begin{equation}\vspace{-0.1cm}\label{eq:actuald}
d^h=\sum_{i \in \mathcal{N}} x_i^h.
\end{equation}

At a given time $h$, we assume that the price per energy unit charged to a customer $i$ is dependent on its fraction of the total current load as follows~\cite{mohsenian2010optimal}:
\begin{equation}\label{eq:price}
c_i^h(\boldsymbol{x})=B \frac{x_i^h}{\sum_{i \in \mathcal{N}} x_i^h},
\end{equation}
where $B$ could be designed based on notions from a locational marginal pricing (LMP) scheme~\cite{powerbookLMP} in which the price function is not necessarily time-dependent. The electricity price for each customer as given by (\ref{eq:price}) allows one to allocate the price for the amount of energy that a customer consumes. For example, as the number of customers increases, the total demand will increase. Then, the electricity pricing must reflect this change in the demand. For each customer, the individual electricity pricing is dependent on its demand proportion as captured by (\ref{eq:price}). Hence, the total cost of user $i$ over $H$ hours is given by
\begin{equation}
\sum_{h=1}^H c_i^h(\boldsymbol{x}) x_i^h.
\end{equation}

Given the price and local demand, an energy market is set up in which all customers seek to minimize their costs while maintaining their energy demands at a desired level. Here, all customers will interact so as to determine their demands under DSM. These demands include the quantities of energy required and impact the price at a certain time. Instead of fixed reservation prices announced by a utility company, in our model, each user can strategically change its demand, in which its demand fraction of the total demand impacts the underlying electricity price. In this respect, the price in (\ref{eq:price}) can lead to an increased cost for all customers, including those that decide not to participate in load management. Moreover, the demand delayed by DSM will cause a varying electricity price in subsequent hours. Thus, we will develop a DSM mechanism based on load shifting.
\vspace{-0.1cm}
\subsection{Load Management}\label{sec:ls}

In this subsection, we propose a load shifting mechanism to analyze the variations of the demand over time. Inherently, load shifting allows part of the peak hour load to be moved to an off-peak hour, in which such controlled demand is delayed to the following time slot in our formulation. For the proposed model, we assume that, for each participating customer, the demand can be adjusted so as to meet a \emph{predefined/target energy demand} profile (vector) set by the utility company, given by
\begin{equation}\label{eq:gd}\vspace{-0.1cm}
\boldsymbol{G_d}=[g_d^1, g_d^2,\ldots, g_d^H].
\end{equation}
On the one hand, a utility company wants to reduce peak hour demand in order to decrease load on the grid. On the other hand, the company 
wants to maintain the amount of load shifted within a reasonable range while avoiding the creation of a new peak hour. For example, if the power company wants to reduce $10\%$ of the load at a given hour $\hat{H}$, the target energy demand could be defined by

\begin{equation}\label{eq:gd2}\vspace{-0.1cm}
g_d^h=\begin{cases}
g^h, &\text{if } h \neq \hat{H},\\
\beta g^h, &\text{if } h = \hat{H},
\end{cases}
\end{equation}
where $\beta=0.9$ and $g_h \in [g^1, g^2,\ldots, g^H]$ can be the \emph{historical demand} referenced by the utility company. Here, we assume that a customer can determine when it starts to participate in DSM. In this respect, customer $i$ will choose its starting time (defined as the \emph{participating time} $a_i$) so as to minimize its cost by observing the difference between the predefined demand per user and its daily demand. In particular, we assume that a customer $i$ would not leave DSM after choosing to participate over $H=24$ hours and thus, the number of participating customers at a given time $h$ is $I^h \triangleq |\mathcal{I}^h|=|\{a_i \ge h\}|$. Here, $a_i$ is the starting hour and we discuss its impact in more detail in Section~\ref{sec:gameconcept}. In this case, the reduced demand of participating customer $i$ is given by
\begin{equation}\label{eq:redde}\vspace{-0.1cm}
r_i^h=\begin{cases}
\gamma_i(x_i^h-\frac{g_d^h-\sum_{i \in \mathcal{N}\setminus \mathcal{I}} x_i^h}{I^h})^+, &\text{if } g_d^h<l^h,\\
0, &\text{if } g_d^h\ge l^h,
\end{cases}
\end{equation}
where $(q)^+:= \max (0, q)$ and $l^h$ is the total demand that includes both $d^h$ and the amount shifted from the previous hour $h-1$, such that $l^h=d^h+\sum_{i \in \mathcal{N}} r_i^{h-1}$. $0<\gamma_i \le 1$ is a factor by which customer $i$ wants to reduce from exceeding its demand, as $\frac{g_d^h-\sum_{i \in \mathcal{N}\setminus \mathcal{I}} x_i^h}{I^h}$ is the averaged demand suggested by the utility company for all participating customers. In particular, if the demand of participating customer $i$ is less than the average demand for participating customers, i.e., $x_i^h<\frac{g_d^h-\sum_{i \in \mathcal{N}\setminus \mathcal{I}} x_i^h}{I^h}$, its reduced demand is $r_i^h=\gamma_i \cdot 0=0$.

Using (\ref{eq:redde}), if customer $i$ participates in DSM at time $h$, its demand will be 
\begin{equation}\label{eq:now}\vspace{-0.1cm}
y_i^h=x_i^h-r_i^h.
\end{equation}
Then, the participating customer $i$ moves its shifted load to the following hour, and thus, its DSM demand at time $h<t<H$ is 
\begin{equation}\label{eq:later1}\vspace{-0.1cm}
y_i^{t}=(x_i^{t}+r_i^{t-1})-r_i^{t},
\end{equation}
while the demand at the final hour $H$ is
\begin{equation}\label{eq:later2}
y_i^H=x_i^{H}+r_i^{H-1}.
\end{equation}

To analyze such a load management scheme, we next propose a new framework that builds on the analytical tools of classical game theory and prospect theory.

\vspace{-0cm}

\section{Game-theoretic Formulation for Demand-Side management}\label{sec:game}\vspace{-0cm}

In this section, we first formulate a noncooperative game between the customers, and then study the proposed load shifting model using expected utility theory~\cite{GT00} and prospect theory~\cite{kahneman1979prospect}, also discussing their various properties.\vspace{-0.1cm}

\subsection{Noncooperative Game Model}\label{sec:gameconcept}\vspace{-0cm}
In order to analyze the interactions between customers, we use noncooperative game theory~\cite{GT00}, as the strategy choices of the customers are interdependent. We formulate a strategic noncooperative DSM game $\Xi=(\mathcal{N},\{\mathcal{A}_i\}_{i\in\mathcal{N}},\{u_i\}_{i\in \mathcal{N}})$, where $\mathcal{N}$ is the set of players, the action $a_i \in \mathcal{A}_i := \{1, 2, \ldots, H\}$ of customer $i$ is the time (hour) at which customer $i$ would like to begin participation in DSM, and the cost function $u_i$ of customer $i$ captures its electricity payment to the company, using the price in (\ref{eq:price}). Here, we note that, although a customer can participate in load management, its total daily demand remains the same (i.e., $\sum_{h=1}^{H} x_i^t=\sum_{h=1}^{H} y_i^t$). Thus, the utility function (cost) for a player $i \in \mathcal{N}$ that chooses an action $a_i$ is given by
\begin{equation}\label{eq:utility}
\begin{split}
u_i(a_i, \boldsymbol{a}_{-i})=\sum_{h=1}^H c_i^h\biggl(\sum_{i \in \mathcal{N}} y_i^h(a_i,\boldsymbol{a}_{-i}) \biggr)\times y_i^h(a_i,\boldsymbol{a}_{-i}),
\end{split}
\end{equation}
where $\boldsymbol{a}_{-i}=[a_1, a_2, \dots, a_{i-1}, a_{i+1}, \dots, a_N]^T$ is the vector of action choices of all players other than $i$, and $y_i^h(a_i,\boldsymbol{a}_{-i})$ is the DSM demand of user $i$, compared to the initial demand $x_i^h$ in (\ref{eq:x}). For example, each customer can determine a starting hour $a_i$ and shift its load to after that hour.

The goal of each customer $i$ is to choose a strategy $a_i \in \mathcal{A}_i$ so as to minimize its cost as given in (\ref{eq:utility}). For characterizing a desirable outcome for the studied game $\Xi$, one must derive a common solution that enables one to capture the coupling between the individual optimization of each customer. We define $\boldsymbol{a}=(a_i, \boldsymbol{a}_{-i})$ as the vector of all players' strategies. For each such vector $\boldsymbol{a}$, we will have a different electricity price $c_i^h(\cdot)$ in (\ref{eq:price}). Prior to finding a solution for the proposed DSM game, we will introduce expected utility theory and prospect theory.

\vspace{-0.1cm}
\subsection{Expected Utility Theory (EUT)}\vspace{-0cm}
In a smart grid, as the customers may, over a long time period, change their DSM preferences, we are interested in studying the frequency with which they choose a certain time to begin DSM participation. Therefore, we mainly study the proposed game under \emph{mixed strategies}~\cite{GT00} so as to capture the customers long-term, probabilistic choices of a DSM start time. Let $\boldsymbol{p}= [\boldsymbol{p}_1,\ldots,\boldsymbol{p}_N]$ be the vector of all mixed strategies, where, for every customer $i\in \mathcal{N}$, we have $\boldsymbol{p}_i=[p_i(1),\ldots, p_i(H)]^T$ and $p_i(a_i)$ is the probability corresponding to the choice of a pure strategy $a_i \in \mathcal{A}_i$. 

Under the conventional EUT model, the expected cost of customer $i$ is captured via the expected value over its mixed strategy. Computing each user's utility requires the vector of all players' strategies. In particular, we assume that the smart grid communication infrastructure will make this information available to users who participate in DSM. Thus, the EUT cost of a player $i$ will be given by \vspace{-0cm}
\begin{equation} \label{eq:multiplayerET}\vspace{-0cm}
\begin{split}
&U_i^{\text{EUT}}( \boldsymbol{p})=\sum_{\boldsymbol{a} \in \mathcal{A}}\bigg(\prod_{l=1}^N p_l(a_l)\bigg) u_i(a_i, \boldsymbol{a}_{-i}),
\end{split}
\end{equation}
where $\boldsymbol{a}$ is the vector of all players' strategies and $\mathcal{A}=\mathcal{A}_1 \times \mathcal{A}_2 \times \dots \times \mathcal{A}_N$. In particular, the mixed-strategy $\boldsymbol{p}(h)=[p_i(h), \boldsymbol{p}_{-i}(h)]^T$ represents the empirical frequencies with which the customers choose a certain participation time $h$. These frequencies capture how often customer $i$ participates in DSM. $\boldsymbol{p}_{-i}(h)$ is the vector of mixed strategies of all players other than $i$, corresponding to $\boldsymbol{a}_{-i}$ in (\ref{eq:utility}).

\vspace{-0.1cm}
\subsection{Prospect Theory}\label{sec:pt}\vspace{-0cm}

As previously mentioned, a player can evaluate its expected utility by using (\ref{eq:multiplayerET}), in which case the customers are assumed to act rationally and objectively under EUT. However, in real life, individuals do not truly behave rationally nor do they trust the rationality of others' behavior. Thus, in order to develop a realistic model of DSM, one must account for the fact that, in practice, customers may not assess their utilities objectively. Indeed, it has been shown that, despite the benefits of DSM, its adoption in practice has remained slow, due to unexpected customer behavior~\cite{FAHEY}.

To study such realistic/practical participation models, we will develop a DSM game model, in which a customer may have subjective views on how the opposing players will choose their strategies. This, in turn, impacts the way in which this customer evaluates its utility in (\ref{eq:utility}) which depends on other customers' strategies. Indeed, it has been observed that in real-life decision-making, people tend to subjectively weight uncertain outcomes~\cite{prelec1998probability}. Uncertainty, here, is defined as the fact that the amount of utility that a customer will obtain, depends on the decision making behavior of other customers, which is probabilistic. For example, a customer cannot be sure of the time at which other customers will begin their participation in DSM. Thus, the customer will not be sure how much economic gain its participation in DSM will yield; since such a participation depends on all the players' strategies. Thus, this customer's EUT evaluation as per (\ref{eq:multiplayerET}) might be \emph{overweighted} or \emph{underweighted} due to the subjective observation of others. Moreover, this customer may need to properly assess whether to shift its demand or not, as it is unsure of other customers' strategies. In this respect, a customer might have its own subjective perception on the participation of other customers in the DSM game (and on the actual price), thus deviating from the rational assumption of conventional game theory and EUT. 

In order to analyze such subjective perceptions, we will use the behavioral framework of prospect theory~\cite{kahneman1979prospect}. In this studied model, we mainly focus on how a customer evaluates the strategies of its opponents and, thus, acts accordingly. Our focus is on the prospect-theoretic perspective, which primarily deals with how human decision makers deal with economic outcomes that have some form of uncertainty.

One important PT concept is the so-called \emph{weighting effect}. Weighting effect naturally implies that customers will have subjective views on how their opponents will act (i.e., a weighted view on the probability vector of those players), which, in turn, maps to subjective views on their expected utilities. In the proposed game, we use this weighting effect to incorporate a subjective evaluation for each user's observation of the mixed strategy of its opponents. Thus, under PT, instead of objectively observing the mixed strategy vector $\boldsymbol{p_{-i}}$ chosen by the other players, each user perceives a weighted version of it, $w_i(\boldsymbol{p_{-i}})$. $w_i(\cdot)$ is a nonlinear transformation that maps an objective probability to a subjective one. PT studies have shown that most people will often overweight low probability outcomes and underweight high probability outcomes \cite{kahneman1979prospect}. Without loss of generality, we assume that all players use a similar weighting approach, such that $w_i(\cdot)=w(\cdot),\ \forall i \in \mathcal{N}$. Although many weighting functions exist, we choose the widely used Prelec function (for a given probability $\sigma$)~\cite{prelec1998probability},
\vspace{-0cm}
\begin{equation} \label{eq:weight}\vspace{-0.1cm}
w(\sigma)=\exp(-(-\ln \sigma)^\alpha),\ 0<\alpha \le 1,
\end{equation}
where $\alpha$ is a parameter used to characterize the distortion between subjective and objective probability. Note that when $\alpha=1$, the weighting effect will coincide with the conventional EUT probability. Fig.~\ref{fig:probaweight} illustrates the probability weighting effect. In this figure, we can see that the objective probability and subjective estimation insect at $p=1/e$ and the curve approximates EUT as $\alpha$ increases. Also, weighted summations are not necessarily equal to $1$ (i.e., $w(0.4)+w(0.6)<1$).

\begin{figure}[!t]
\begin{center}
 \vspace{-0.2cm}
 \includegraphics[width=8cm]{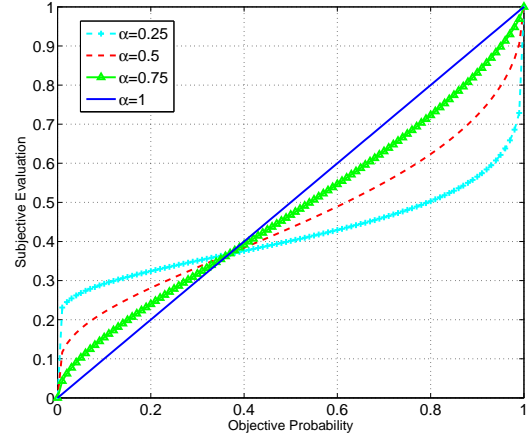}
  \vspace{-0.5cm}
   \caption{\label{fig:probaweight} The subjective probability under the weighting effect, as the objective probability varies.}
\end{center}\vspace{-0.9cm}
\end{figure}

Under PT, the expected utility achieved by a player $i$, given the weighting effect, is \vspace{-0cm}
\begin{equation} \label{eq:multiplayerPT}\vspace{-0cm}
\begin{split}
&U_i^{\text{PT}}( \boldsymbol{p}) = \sum_{\boldsymbol{a} \in \mathcal{A}}\!\! \bigg(p_i(a_i)\!\!\!\!\!\! \prod_{l \in \mathcal{N} \setminus \{i\}}^N\!\!\!\!\!\! w(p_l(a_l))\bigg) u_i(a_i, \boldsymbol{a}_{-i}).
\end{split}
\end{equation}

Here, we assume that a player has a subjective evaluation only of the other players' strategy probabilities. In this respect, customer $i$'s subjective evaluation of its own probability is equal to its objective probability. 

Having defined the cost functions under both EUT and PT, our next goal is to find a solution for the game. Given the set of probability distributions $\mathcal{P}_i$ over its set of strategies $\mathcal{A}_i$, a suitable solution of the game would be a mixed-strategy Nash equilibrium defined as follows:
\begin{definition}
A mixed strategy profile $\boldsymbol{p}^* \in \mathcal{P}=\prod_{i=1}^N \mathcal{P}_i $ is a \emph{mixed strategy Nash equilibrium} (NE) if, for each customer $i \in \{1,2,\ldots, N\}$, we have ($U_i$ represents $U_i^\text{EUT}$ and $U_i^\text{PT}$ under EUT and PT, respectively)\vspace{-0cm}
\begin{equation}
U_i(p_i^*,\boldsymbol{p}_{-i}^*) \le U_i(p_i,\boldsymbol{p}_{-i}^*), \  \forall p_i \in  \mathcal{P}_i.
\end{equation}
\end{definition}
In practice, to avoid slow convergence times and unnecessary overhead, we can consider approximate equilibrium solutions that allow the players to reach the neighborhood of an equilibrium. Hence, we mainly focus on $\epsilon$-Nash equilibria which are defined as follows:
\begin{definition}\label{def:eNE}
A mixed strategy profile $\boldsymbol{p}^* \in \mathcal{P}=\prod_{i=1}^N \mathcal{P}_i $ is an \emph{$\epsilon$-Nash equilibrium ($\epsilon$-NE)} if, for every player $i$, we have\vspace{-0cm}
\begin{equation}\label{eq:epsilonNE}
U_i(p_i^*,\boldsymbol{p}_{-i}^*) \le U_i(p_i,\boldsymbol{p}_{-i}^*)+\epsilon, \  \forall p_i \in  \mathcal{P}_i,
\end{equation}
where $\epsilon$ is a small positive number. 
\end{definition}

In order to find the solution for the game $\Xi$ under both EUT and PT, we find a mixed $\epsilon$-NE in strategic form which represents a point within a close neighborhood of the exact equilibrium~\cite{GT00}.

\vspace{-0cm}

\section{Game Solution and Algorithm}\label{sec:algo}\vspace{-0cm}

In this section, we propose a novel algorithm, under both EUT and PT, to solve the studied DSM game and find an equilibrium point. The proposed algorithm builds on classical fictitious play (FP)~\cite{brown1951iterative}. Thus, we propose the following iterative algorithm to find an $\epsilon$-Nash equilibrium of the proposed game:
\begin{equation} \label{eq:algo}
\boldsymbol{p}_i^{(k+1)}=\boldsymbol{p}_i^{(k)}+\frac{\lambda}{k+1}(\boldsymbol{v}_i^{(k)}-\boldsymbol{p}_i^{(k)}),
\end{equation}
where $0<\lambda<1$ is an inertia weight, $k$ is the iteration index, and $\boldsymbol{v}_i^{(k)}=[v_i^{(k)}(a_1), v_i^{(k)}(a_2), \ldots, v_i^{(k)}(a_H)]^T$ is a vector of player $i$'s strategies such that, $v_i^{(k)}(a_l)=1$ if player $i$ chooses the $l$th strategy at iteration $k$ and $v_i^{(k)}(a_l)=0$ for updating the strategies excluding the $l$th strategy. The pure strategy i.e., the $l$th strategy, is the one that minimizes the expected utility with respect to the updated empirical frequencies. Thus, player $i$ can repeatedly choose its strategy and update $\boldsymbol{v}_i^{(k)}$ as follows:
\begin{equation}\label{eq:FPaction}\vspace{-0.1cm}
v_i^{(k)}(a_l^{(k)})\!=\!
\begin{cases}1, \text{if } a_l^{(k)}\!=\!\arg\min\limits_ {a_i \in \mathcal{A}_i}{u}_{i}(a_i, \boldsymbol{p}_{-i}^{(k-1)}),\!\!\!\!\\
0, \text{otherwise,}
\end{cases}
\end{equation}
where the utility here is the expected value obtained by player $i$ with respect to the mixed strategy of the opponent, when player $i$ chooses pure strategy $a_l$ for EUT and its weighted mixed strategy $w(p_l(a_l))$ for PT. It is well known that our algorithm (as a simplified iterative approach of smooth fictitious play (SFP)) is guaranteed to converge to a mixed $\epsilon$-Nash equilibrium~\cite{fudenberg1995consistency}, as the players' beliefs (probabilistic choices) converge to a fixed point. In particular, a player's belief is its mixed strategy set, and $\epsilon$ represents the utility difference based on beliefs. Here, we define $\epsilon_p$ as the mixed strategy difference corresponding to the utility. SFP can reach an $\epsilon$-NE under EUT, in which the belief difference $\epsilon_p$ between two iterations is small when the number of iterations $k$ goes to infinity. However, to our knowledge, such a result has not been extended to PT. Here, we first prove that the proposed algorithm in (\ref{eq:algo}) will converge to a fixed point in beliefs, as follows:

\begin{theorem}\label{th:cov1}
There exists an inertia weight $\lambda$, $0<\lambda<1$, such that the iterative algorithm in (\ref{eq:algo}) converges to a fixed point in belief, as a mixed $\epsilon_p$-equilibrium ($\epsilon_p \le \frac{\sqrt{2}}{k+1}$).
\end{theorem}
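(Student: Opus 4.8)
The plan is to split the statement into two parts: a quantitative per-iteration bound on the change of beliefs --- which is the substance of ``$\epsilon_p\le\frac{\sqrt 2}{k+1}$'' --- and the limiting claim that the beliefs settle to a fixed point. The first part is an elementary geometric estimate on the simplex; the second part leans on the stochastic-approximation reading of the recursion (\ref{eq:algo}) together with the smooth-fictitious-play convergence results already cited in the text.

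First I would note that, by (\ref{eq:FPaction}), the vector $\boldsymbol{v}_i^{(k)}$ is always a vertex of the probability simplex $\mathcal{P}_i$ (a standard basis vector $\boldsymbol{e}_{a_l}$), while $\boldsymbol{p}_i^{(k)}\in\mathcal{P}_i$ is a probability vector. Writing $p:=p_i^{(k)}(a_l)$ for the current mass the belief puts on the chosen pure strategy, the difference $\boldsymbol{v}_i^{(k)}-\boldsymbol{p}_i^{(k)}$ has one coordinate equal to $1-p$ and the remaining coordinates equal to $-p_i^{(k)}(a_j)$, so
\begin{equation}
\|\boldsymbol{v}_i^{(k)}-\boldsymbol{p}_i^{(k)}\|_2^2=(1-p)^2+\sum_{j\neq l}\big(p_i^{(k)}(a_j)\big)^2\le (1-p)^2+\Big(\sum_{j\neq l}p_i^{(k)}(a_j)\Big)^2=2(1-p)^2\le 2,
\end{equation}
where the inequality uses $\|\cdot\|_2\le\|\cdot\|_1$ on the nonnegative tail coordinates together with $\sum_{j\neq l}p_i^{(k)}(a_j)=1-p$, and the last step uses $p\ge 0$. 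Hence $\|\boldsymbol{v}_i^{(k)}-\boldsymbol{p}_i^{(k)}\|_2\le\sqrt 2$ for every $i$ and $k$, and from (\ref{eq:algo}),
\begin{equation}
\epsilon_p:=\|\boldsymbol{p}_i^{(k+1)}-\boldsymbol{p}_i^{(k)}\|_2=\frac{\lambda}{k+1}\,\|\boldsymbol{v}_i^{(k)}-\boldsymbol{p}_i^{(k)}\|_2\le\frac{\lambda\sqrt 2}{k+1}<\frac{\sqrt 2}{k+1},
\end{equation}
which holds for any $\lambda\in(0,1)$; this proves the stated bound and shows the belief increments vanish as $k\to\infty$.

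Next I would argue that the beliefs actually converge. Recursion (\ref{eq:algo}) is a Robbins--Monro scheme with step sizes $\lambda/(k+1)$ obeying $\sum_k\lambda/(k+1)=\infty$ and $\sum_k(\lambda/(k+1))^2<\infty$, whose mean-field dynamics are the fictitious-play differential inclusion $\dot{\boldsymbol{p}}\in\mathrm{BR}(\boldsymbol{p})-\boldsymbol{p}$. Because the action sets are finite and, via the continuous Prelec weight $w(\cdot)$ in (\ref{eq:weight}), the perceived payoff $\boldsymbol{p}\mapsto u_i(a_i,\boldsymbol{p}_{-i})$ entering (\ref{eq:multiplayerPT}) is continuous, the best-response correspondence is upper hemicontinuous with nonempty convex compact values; the iterates lie in the compact set $\mathcal{P}$ and thus form a bounded asymptotic pseudo-trajectory of the inclusion, so --- invoking the smooth-fictitious-play convergence already referenced --- they converge to a rest point $\boldsymbol{p}^{*}$, which by (\ref{eq:FPaction}) is a profile at which no unilateral deviation improves a player's cost by more than the residual $\epsilon$, i.e.\ the $\epsilon$-NE of Definition~\ref{def:eNE}.

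I expect the convergence step, not the norm estimate, to be the real obstacle: for general finite games fictitious-play beliefs may cycle rather than converge, so to obtain an honest fixed-point claim one must exploit additional structure of the cost (\ref{eq:utility}) --- for instance a potential-game or diagonal-strict-concavity property induced by the proportional pricing (\ref{eq:price}) --- or else settle for the weaker conclusion that the increments vanish and the iterates accumulate on the set of $\epsilon$-equilibria. A further point special to PT is that $\sum_{a_i}w(p_i(a_i))\neq 1$, so the ``perceived'' game is nonstandard; one should check that the best-response map induced by (\ref{eq:multiplayerPT}) remains well defined and continuous (it does, since $w$ is continuous on $[0,1]$ and each $u_i$ is bounded), so that the stochastic-approximation argument transfers verbatim from the EUT case to the PT case.
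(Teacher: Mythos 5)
Your quantitative estimate is essentially the paper's: the paper likewise uses the fact that $\boldsymbol{v}_i^{(k)}$ is a vertex of the simplex while $\boldsymbol{p}_i^{(k)}$ is a probability vector to bound their Euclidean distance by $\sqrt{2}$, then divides by $k+1$. The only difference is bookkeeping: the paper defines $\epsilon_p$ as the per-iteration deviation of the damped scheme from undamped fictitious play, $\frac{1-\lambda}{k+1}\lvert \boldsymbol{p}_i^{(k)}-\boldsymbol{v}_i^{(k)}\rvert \le \frac{(1-\lambda)\sqrt{2}}{k+1}$, whereas you bound the per-iteration belief increment $\frac{\lambda}{k+1}\lvert \boldsymbol{v}_i^{(k)}-\boldsymbol{p}_i^{(k)}\rvert$; both are below $\frac{\sqrt{2}}{k+1}$, and since the theorem statement does not pin down which quantity $\epsilon_p$ measures, either reading is defensible.

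The divergence, and the gap, lies in the fixed-point claim. The paper unrolls (\ref{eq:algo}) into the product form (\ref{eq:pik2}): a term $\bigl(\prod_{k}(1-\frac{\lambda}{k+1})\bigr)\boldsymbol{p}_i^{(1)}$ that vanishes, plus a weighted sum of past best responses $\boldsymbol{v}_i^{(j)}$, and asserts this sum converges because its terms are decreasing and bounded. You instead pass through the stochastic-approximation/asymptotic-pseudo-trajectory reading of the recursion and then explicitly concede that, for a general finite game, this machinery yields only vanishing increments and accumulation on limit sets of the best-response dynamics, not convergence to a point, unless additional structure (a potential property, etc.) is exhibited. That concession means your write-up does not actually establish the theorem as stated: increments of order $1/k$ do not by themselves force convergence (harmonic-type partial sums can drift indefinitely), and you supply no structural property of the cost (\ref{eq:utility}) under PT weighting that rules out cycling. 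In fairness, the step you leave open is also the weakest step of the paper's own proof --- ``terms decreasing and bounded'' does not imply a convergent series, and the weighted-average form converges only if the empirical distribution of the $\boldsymbol{v}_i^{(j)}$ settles, which is exactly what needs proving --- but within this exercise the convergence-to-a-fixed-point claim remains unestablished in your argument, and your closing identification of the rest point with an $\epsilon$-NE belongs to Theorem~\ref{th:cov2}, not to this theorem.
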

\begin{proof}
For the proposed DSM game, the proposed SFP process is guaranteed to converge to a fixed point in beliefs~\cite{fudenberg1995consistency} under EUT. Here, we mainly prove that, under PT, a player $i$'s iterative sequence $\{\boldsymbol{p}_i(k)\}$ converges to a mixed $\epsilon_p$-equilibrium in beliefs. From (\ref{eq:algo}), we have
\begin{equation}\label{eq:pik1}
\begin{split}
\boldsymbol{p}_i^{(k+1)}=&\biggl(1-\frac{\lambda}{k+1}\biggr)\boldsymbol{p}_i^{(k)}+\frac{\lambda}{k+1}\boldsymbol{v}_i^{(k)}\\
=&\biggl(1-\frac{1}{k+1}\biggr)\boldsymbol{p}_i^{(k)}+\frac{1}{k+1}\boldsymbol{v}_i^{(k)}+\frac{1-\lambda}{k+1}\boldsymbol{p}_i^{(k)}\\
&-\frac{1-\lambda}{k+1}\boldsymbol{v}_i^{(k)},
\end{split}
\end{equation}
where the first two terms represent the iteration using FP~\cite{brown1951iterative} and, we define 
\begin{equation}\label{eq:epk}\vspace{-0.1cm}
\epsilon_p=\biggl|\frac{1-\lambda}{k+1}\boldsymbol{p}_i^{(k)}-\frac{1-\lambda}{k+1}\boldsymbol{v}_i^{(k)}\biggr|.
\end{equation}
In (\ref{eq:pik1}) and (\ref{eq:epk}), we present the belief difference between the proposed algorithm and FP, where there exists an $\epsilon_p$ difference at the $k$th iteration. In particular, the distance boundary between FP belief and the approached belief can be bounded as follows:
\begin{equation}\label{eq:epdis}
\begin{split}
\epsilon_p = \frac{1-\lambda}{k+1}| (\boldsymbol{p}_i^{(k)}-\boldsymbol{v}_i^{(k)})| \le  \frac{1-\lambda}{k+1}\cdot \sqrt{2},
\end{split}
\end{equation}
where $\epsilon_p$ becomes small as $k$ increases. Indeed, the Euclidean distance in (\ref{eq:epdis}) involves $\boldsymbol{v}_i^{(k)}$ in (\ref{eq:FPaction}), one of whose components is $1$, and the mixed strategy set $\boldsymbol{p}_i^{(k)}$, whose components will sum to $1$. Thus, the maximum Euclidean distance between $\boldsymbol{p}_i^{(k)}$ and $\boldsymbol{v}_i^{(k)}$ is less than $\frac{\sqrt{2}}{k+1}$ at the $k$th iteration.

Within the given boundary, (\ref{eq:pik1}) can be rewritten as
\begin{equation}\label{eq:pik2}\vspace{-0cm}
\begin{split}
\boldsymbol{p}_i^{(k+1)}\!\!=&\biggl(1-\frac{\lambda}{k+1}\biggr)\boldsymbol{p}_i^{(k)}+\frac{\lambda}{k+1}\boldsymbol{v}_i^{(k)}\\
=&\biggl(1-\frac{\lambda}{k+1}\biggr)\biggl(1-\frac{\lambda}{k}\biggr)\boldsymbol{p}_i^{(k-1)}+\frac{\lambda}{k+1}\boldsymbol{v}_i^{(k)}\\
&+\biggl(1-\frac{\lambda}{k+1}\biggr)\frac{\lambda}{k}\boldsymbol{v}_i^{(k-1)}\\
=&\cdots\\
=&\biggl(1-\frac{\lambda}{k+1}\biggr)\biggl(1-\frac{\lambda}{k}\biggr)\cdots\biggl(1-\frac{\lambda}{2}\biggr)\boldsymbol{p}_i^{(1)}\\
&+\frac{\lambda}{k+1}\boldsymbol{v}_i^{(k)}+\biggl(1-\frac{\lambda}{k+1}\biggr)\frac{\lambda}{k}\boldsymbol{v}_i^{(k-1)}+\cdots\\
&+\biggl(\prod_{j=2}^{k}\bigl(1-\frac{\lambda}{j+1}\bigr)\biggr)\frac{\lambda}{2}\boldsymbol{v}_i^{(1)}.
\end{split}
\end{equation}

The first term in (\ref{eq:pik2}) is $\biggl(\prod_{k=1}^{k=+\infty}(1-\frac{\lambda}{k+1})\biggr)\boldsymbol{p}_i^{(1)}$. Since $\prod_{k=1}^{k=+\infty}(1-\frac{\lambda}{k+1})$ is decreasing as $k$ increases and bounded by $|\boldsymbol{p}_i^{(1)}|\le 1$, the first term is convergent as $k$ goes to infinity. In fact, this term converges to $0$.

For the remaining terms in (\ref{eq:pik2}), we can define
\begin{equation}
\begin{split}
h_j=&\biggl(1-\frac{\lambda}{k+1}\biggr)\biggl(1-\frac{\lambda}{k}\biggr)\cdots\biggl(1-\frac{\lambda}{j+2}\biggr)\frac{\lambda}{j+1}\boldsymbol{v}_i^{(j)}\\
=&\biggl(\prod_{j\le k}\bigl(1-\frac{\lambda}{j+2}\bigr)\biggr)\frac{\lambda}{j+1}\boldsymbol{v}_i^{(j)}.
\end{split}
\end{equation}
Similarly, since $|h_j|$ is decreasing as $k$ increases and bounded by $|\boldsymbol{v}_i^{(j)}|=1$, $\sum_{j=1}^{k} h_j$ is convergent as $k$ goes to infinity. Thus, we conclude that, $\boldsymbol{p}_i^{(k+1)}$ will converge to a fixed point, as a mixed $\epsilon_p$-equilibrium in beliefs using (\ref{eq:algo}).
\end{proof}
\begin{remark} \label{rmk:fp}
When $\lambda=1$, the proposed algorithm in (\ref{eq:pik1}) is reduced to FP. Hence (\ref{eq:pik1}) and (\ref{eq:pik2}) can be derived as $\boldsymbol{p}_i^{(k+1)}
=\frac{1}{k+1}\boldsymbol{p}_i^{(1)}+\frac{1}{k+1}(\boldsymbol{v}_i^{(k)}+\cdots+\boldsymbol{v}_i^{(1)})$. Thus, FP ($\lambda=1$) might cycle if $\boldsymbol{v}_i^{(k)}$ repeats after some iteration, i.e., $\boldsymbol{v}_i^{(2)}=\boldsymbol{v}_i^{(4)}=\boldsymbol{v}_i^{(6)}=\cdots$, and $\boldsymbol{v}_i^{(1)}=\boldsymbol{v}_i^{(3)}=\boldsymbol{v}_i^{(5)}=\cdots$, while the proposed algorithm ($0<\lambda<1$) in (\ref{eq:algo}) converges to a fixed point.
\end{remark}

\begin{theorem}\label{th:cov2}
For the proposed DSM game, the proposed algorithm in (\ref{eq:algo}) is guaranteed to converge to a mixed $\epsilon$-NE under both EUT and PT, as its beliefs converge to a mixed $\epsilon_p$-equilibrium.
\end{theorem}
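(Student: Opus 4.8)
The plan is to bridge Theorem~\ref{th:cov1} (convergence of beliefs to within $\epsilon_p$ of a fixed point of the fictitious-play map) with Definition~\ref{def:eNE} (an $\epsilon$-NE in terms of utilities). The argument splits into two ingredients. First, for the EUT game it is classical that smooth fictitious play converges to an $\epsilon$-NE~\cite{fudenberg1995consistency}, so the EUT part reduces to noting that the inertia-weighted update (\ref{eq:algo}) keeps the beliefs inside the $\epsilon_p$-neighbourhood quantified in Theorem~\ref{th:cov1}. Second, for the PT game I would show that the perceived cost $U_i^{\text{PT}}$ is a uniformly continuous function of the mixed-strategy profile, and then convert the belief accuracy $\epsilon_p\le\sqrt{2}/(k+1)$ into a utility accuracy $\epsilon$ through the modulus of continuity. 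The output is then an $\epsilon$-NE for $\epsilon=\epsilon(k)\to0$.

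The first step I would carry out is a continuity/stability lemma for the objectives. Since $\mathcal{A}=\prod_i\mathcal{A}_i$ is finite, the stage costs $u_i(a_i,\boldsymbol{a}_{-i})$ in (\ref{eq:utility}) are bounded by some $u_{\max}$. Under EUT, $U_i^{\text{EUT}}(\boldsymbol{p})$ in (\ref{eq:multiplayerET}) is multilinear in the coordinates of $\boldsymbol{p}$, hence Lipschitz in $\boldsymbol{p}_i$ and in $\boldsymbol{p}_{-i}$ with a constant $L$ depending only on $u_{\max}$, $N$ and $H$. Under PT, $U_i^{\text{PT}}(\boldsymbol{p})$ in (\ref{eq:multiplayerPT}) is the composition of such a multilinear map with the coordinatewise application of the Prelec weight $w(\cdot)$ of (\ref{eq:weight}); since $w$ is continuous on the compact interval $[0,1]$ it is uniformly continuous there with some modulus $\omega(\cdot)$, and each factor $w(p_l(a_l))\in[0,1]$, so one obtains $|U_i^{\text{PT}}(\boldsymbol{p})-U_i^{\text{PT}}(\boldsymbol{q})|\le C\,\omega(\|\boldsymbol{p}-\boldsymbol{q}\|)$ for a constant $C=C(u_{\max},N,H)$. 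Continuity of $U_i^{\text{PT}}$ over the finite action set also guarantees that the PT best response in (\ref{eq:FPaction}) is well defined.

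I would then close the argument. At iteration $k$, the vertex $\boldsymbol{v}_i^{(k)}$ of (\ref{eq:FPaction}) is by construction a pure-strategy minimiser of player $i$'s expected cost — $u_i(a_i,\boldsymbol{p}_{-i}^{(k-1)})$ under EUT and its $w$-weighted counterpart under PT — against the current beliefs. By Theorem~\ref{th:cov1}, the realised belief $\boldsymbol{p}_i^{(k)}$ differs from the fictitious-play belief by at most $\epsilon_p=\frac{1-\lambda}{k+1}|\boldsymbol{p}_i^{(k)}-\boldsymbol{v}_i^{(k)}|\le\sqrt{2}/(k+1)$, and the opponents' beliefs have likewise converged. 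Plugging this into the stability lemma, the cost attained at the reached profile exceeds the best-response cost by at most $L\epsilon_p$ in the EUT case and at most $C\,\omega(\epsilon_p)$ in the PT case; hence no player can gain more than $\epsilon:=C\,\omega(\sqrt{2}/(k+1))$ (resp. $L\sqrt{2}/(k+1)$) by a unilateral deviation, which is exactly (\ref{eq:epsilonNE}). Since $\epsilon_p\to0$, for any prescribed tolerance $\epsilon$ one picks $k$ large enough, establishing the claimed mixed $\epsilon$-NE under both EUT and PT.

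The hard part will be the PT case, and in particular the weighting function. Unlike the multilinear EUT objective, $U_i^{\text{PT}}$ is nonlinear through $w$, and the Prelec $w$ is \emph{not} Lipschitz near $0$ (its derivative is unbounded as $\sigma\to0^+$), so the standard Lipschitz-based smooth-fictitious-play estimate does not transfer verbatim; one must instead work with the uniform-continuity modulus $\omega$ on the compact simplex. A related subtlety is that the weighted probabilities $w(p_l(a_l))$ need not sum to one, so the usual simplex geometry cannot be used to control how coordinatewise perturbations propagate through the product $\prod_{l\neq i}w(p_l(a_l))$; this is handled by the elementary bound $|\prod_l s_l-\prod_l t_l|\le\sum_l|s_l-t_l|$ for numbers in $[0,1]$, combined with $|w(p)-w(q)|\le\omega(|p-q|)$. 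Once these two points are dispatched, the remaining chain $\epsilon_p\le\sqrt{2}/(k+1)\Rightarrow\epsilon\le C\,\omega(\sqrt{2}/(k+1))\to0$ is routine.
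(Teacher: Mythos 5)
Your continuity analysis of $U_i^{\text{PT}}$ (bounded stage costs, the product bound for factors in $[0,1]$, the uniform-continuity modulus $\omega$ in place of a Lipschitz constant because the Prelec weight has unbounded slope at $0$) is sound, and it is exactly the kind of estimate the paper uses implicitly through its $\epsilon_s$ inequalities. The gap is in your closing step. You convert the belief bound $\epsilon_p\le\sqrt{2}/(k+1)$ of Theorem~\ref{th:cov1} directly into a bound on the best-response gap of the limit profile by ``plugging into the stability lemma,'' but $\epsilon_p$ is small only because of the $\tfrac{1-\lambda}{k+1}$ prefactor: it does not say that $\boldsymbol{p}_i^{(k)}$ is close to the best-response vertex $\boldsymbol{v}_i^{(k)}$, nor that the limit $\boldsymbol{p}^*$ is close to any profile at which player $i$ is best responding. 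At a genuinely mixed limit, $|\boldsymbol{p}_i^{(k)}-\boldsymbol{v}_i^{(k)}|$ stays of order one for all $k$, and convergence of iterates under a vanishing step size never by itself implies that the limit is a fixed point of the best-response map. Measuring distance to the fictitious-play trajectory does not rescue the PT case either, because there is no prior result asserting that the PT-driven FP limit is an equilibrium --- that is precisely what the theorem must establish. So uniform continuity alone cannot deliver (\ref{eq:epsilonNE}), and the claimed accuracy $\epsilon=C\,\omega(\sqrt{2}/(k+1))$ does not follow from what you have.

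The missing ingredient is the link between limiting frequencies and best responses, which is how the paper closes the argument (by contradiction): if $\boldsymbol{p}^*$ were not an $\epsilon$-NE, there would be an action $a_i$ with $p_i^*(a_i)>0$ whose PT cost against $\boldsymbol{p}^*_{-i}$ exceeds that of some alternative $a_i'$ by a definite margin $2\epsilon_s$; a continuity estimate of exactly your type then shows that for all large $n$ the same strict ranking holds against $\boldsymbol{p}^{(n)}_{-i}$, so the update rule (\ref{eq:FPaction}) never selects $a_i$ again, and the averaging in (\ref{eq:algo}) with weights $\lambda/(k+1)$ drives its empirical frequency to zero, contradicting $p_i^*(a_i)>0$. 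Equivalently, you could argue directly: any action retained with positive limiting probability is selected infinitely often along the tail, hence is a PT best response to beliefs arbitrarily close to $\boldsymbol{p}^*_{-i}$, hence by your modulus an approximate best response to $\boldsymbol{p}^*_{-i}$; since $U_i^{\text{PT}}$ is linear in the player's own probabilities, a mixture of $\epsilon$-best responses is an $\epsilon$-best response. Add this frequency/best-response step and your continuity lemma becomes the quantitative engine of a complete proof; without it, the final implication is a non sequitur.
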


The proof of Theorem~\ref{th:cov2} is given in the appendix.

\begin{table}[!t]\vspace{-0.2cm}
  \centering
  \caption{
    \vspace*{-0.1em}Proposed Load Shifting Solution}\vspace*{-0.5em}
    \begin{tabular}{p{8cm}}
      \hline\vspace*{+0.05em}
      
\textbf{Phase 1 - Proposed Dynamics:}   \vspace*{.1em}\\
\hspace*{1em}Each customer $i \in \mathcal{N}$ chooses a starting strategy $\boldsymbol{p}_i^{\textrm{init}}$ as its mixed strategy of participation.\vspace*{.1em}\\
\hspace*{2em}\textbf{repeat,}\vspace*{.2em}\\
\hspace*{1em}1) Each customer $i \in \mathcal{N}$ observes others' participation $\boldsymbol{p}_{-i}$\vspace*{.1em}\\
\hspace*{1em}2) Each customer $i\! \in\! \mathcal{N}$ updates its better response strategy using\\
\hspace*{1em}the proposed algorithm in (\ref{eq:algo}):\\
\hspace*{3em}i) The utility operator communicates with the customer\vspace*{.1em}\\
\hspace*{3em}using the grid's two-way architecture communication\vspace*{.1em}\\
\hspace*{3em}(see \cite{EW03} or \cite{hossain2012smart} and references therein).\vspace*{.1em}\\
\hspace*{3em}ii) Customers' loads can be shifted via DSM as in Section~\ref{sec:ls}.\vspace*{.1em}\\
\hspace*{4em}\textbf{Load Shedding}\vspace*{.1em}\\
\hspace*{4em}a) The utility operator advertises all customers' participation\vspace*{.1em}\\
\hspace*{4em}times $a_i \in \mathcal{A}_i$ using their mixed strategies.\vspace*{.1em}\\
\hspace*{4em}b) Each customer publishes its participation $p_i^\text{PT}$ based on a \\
\hspace*{4em}mixed strategy $\boldsymbol{p}^\text{EUT}$, representing the subjective observation\\
\hspace*{4em}with underlying weight effect.\vspace*{.1em}\\
\hspace*{4em}c) After combining probabilities as in  (\ref{eq:multiplayerET}), customer $i$ will\\
\hspace*{4em}receive an expected cost under EUT, sent by the operator,\\
\hspace*{4em}and observe the current vector of strategies $\boldsymbol{p}_{-i}$ so as to\\
\hspace*{4em}assess its subjective utility in (\ref{eq:multiplayerPT}) under PT.\vspace*{.1em}\\
\hspace*{4em}d) Customer $i$ chooses its strategic response in (\ref{eq:algo}). \vspace*{.1em}\\
\hspace*{2em}\textbf{until} convergence to a mixed NE strategy vector $\boldsymbol{p}^*$.\vspace*{.2em}\\
\textbf{Phase 2 - Power Company Strategy}   \vspace*{.1em}\\
\hspace*{1em}1) The operator receives the participation information\vspace*{.1em}\\
\hspace*{1em}given the mixed strategy set as per $\boldsymbol{a}^*$.\vspace*{.1em}\\
\hspace*{1em}2) Actual load shifting is implemented under realistic participation.\vspace*{.1em}\\
   \hline
    \end{tabular}\label{tab:algo}\vspace{-0.7cm}
\end{table}

We summarize the proposed DSM solution in Table~\ref{tab:algo}. To find the solution of this game under both EUT and PT, we use the algorithm in (\ref{eq:algo}) to solve for the mixed $\epsilon$-NE. In the first phase of the algorithm, each customer will set an initial probability vector, while any non-participating customers will set its probability to $0$ throughout the whole DSM process. At the beginning, each participating customer observes others' strategies and evaluates its expected costs of every strategy. In the evaluation process, a participating customer can receive an estimate of the expected costs from the grid operator under EUT, since the grid operator objectively collects information and provides required services/information. Then, the customers will overweight or underweight the information about others' strategies based on the individual weighting effect in (\ref{eq:weight}). Moreover, each customer will subjectively estimate its expected costs of every strategy and then frequently report its participation based on its mixed strategy. The communication between customers and grid operator will continue until the expected costs satisfy the setting $\epsilon$ in utility (\ref{eq:multiplayerET}), (\ref{eq:multiplayerPT}) and (\ref{eq:epsilonNE}). Once a mixed $\epsilon$-NE is reached, the customers will frequently signal their participation decisions based on probabilities and participate in demand-side management in Phase $2$. This phase of the proposed load shifting solution is the practical market operation. Given the submitted information, the grid operator will shift/reduce local loads. The actual process of Phase $2$ is beyond the scope of this paper and will follow real-world contract negotiations that could be interesting to study in future work.

\vspace{-0.1cm}
\section{Simulation Results and Analysis}\label{sec:sim}\vspace{-0cm}
For our simulations, we use the real-world load profile in~\cite{web} which represents customers' initial demands, i.e., the data between April $29$th, 2013, and May $4$th, 2013, from Miami International Airport, since the data of local customer houses, or groups, is large, private and confidential. In all simulations, each customer can choose a starting time to participate in DSM from the time period between $18$:$00$ and $20$:$00$. Alternatively, the customer can decide not to participate; that is, $\mathcal{A}_i = \{18, 19, 20, 24\}, \forall i \in \mathcal{N}$. Also, we set $\beta=0.86$ and $\gamma=0.6$ for all customers.

\begin{figure}[!t]
 \begin{center}
 \vspace{-0.2cm}
  \includegraphics[width=8cm]{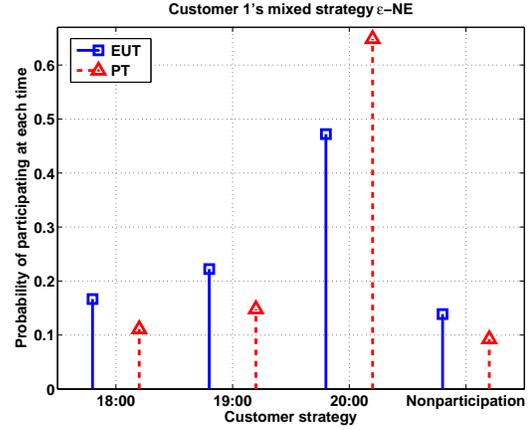}
 \vspace{-0.5cm}
   \caption{\label{fig:histP1} The probability performance of all mixed strategic participations for Customer $1$ under both EUT and PT.}
\end{center}\vspace{-0.2cm}
\end{figure}

\begin{figure}[!t]
 \begin{center}
 \vspace{-0.2cm}
  \includegraphics[width=8cm]{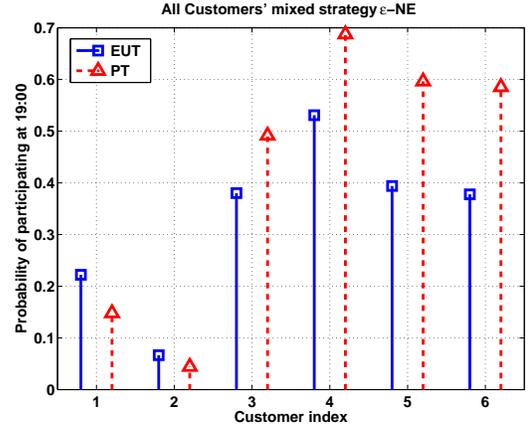}
 \vspace{-0.5cm}
   \caption{\label{fig:hist1900} The mixed strategic participations for the six customers using both EUT and PT at $19$:$00$.}
\end{center}\vspace{-0.9cm}
\end{figure}

Fig.~\ref{fig:histP1} shows, for a smart grid with six customers, all mixed strategies of a selected Customer $1$ under both EUT and PT. In this figure, we choose $\alpha=0.7$ for all customers to represent their distortion under the weighting effect in (\ref{eq:weight}). Clearly, Customer 1's PT strategy is different than its EUT strategy. Compared to the EUT results (solid lines), the PT participating strategy (dash line) at $20$:$00$ is larger, while the other PT mixed strategies are smaller. In essence, we observe that PT generally enhances (reduces) high (low) frequency EUT strategies. For instance, under EUT, Customer $1$ will have the highest participation strategy at $20$:$00$ due to load shifting as per (\ref{eq:redde}) and the varying price in (\ref{eq:price}). In particular, its largest mixed strategy is greater than the average mixed strategy, i.e., $0.25$ in the proposed four-strategy game. Second, because of the weighting effect in (13), in Fig.~\ref{fig:histP1}, we can see how PT behavior can be different from EUT. In particular, we observe that a PT customer wants to participate more than in the EUT case at times when it has a low hourly energy prices in (\ref{eq:price}), which also corresponds to a situation with high participation under EUT. In other words, for minimizing its cost, under PT, each customer tends to increase its participation at the hours with lower hourly prices in (\ref{eq:price}) and to decrease its participation at the hours with higher hourly prices. This is mainly due to the relationship between shifting the demand and the resulting price as captured by (\ref{eq:price}) and (\ref{eq:utility}). Thus, under realistic behavioral considerations, customers will accentuate the traditional behavior perceived under EUT. This will consequently impact the overall DSM performance as shown in the next figures. Accordingly, under PT, we can observe that higher EUT probabilities will become more pronounced. Thus, the largest mixed strategy using EUT, i.e., at $20$:$00$, would be overweighted via PT observation, and vice versa. This is due to the fact that each PT customer takes a risk by participating when the hourly costs are low.

Fig.~\ref{fig:hist1900} shows, using the same parameters as Fig.~\ref{fig:histP1}, the probability that each customer participates in DSM at $19$:$00$ under both EUT and PT. In this figure, we can first see that the mixed strategy of each customer using PT is different from that resulting from EUT. Under EUT, the rational mixed strategies made by some customers, such as Customers $3$-$6$, are higher than the average mixed strategy, because they have low costs at $19$:$00$. In particular, given the price in (\ref{eq:price}), a customer's lowest demand between $18$:$00$ and $20$:$00$ would cause the lowest costs and the highest participation. This implies that a rational customer wants to participate in DSM when it does not require a lot of energy (or its demand is low) in practice, since the payment is low. Under PT, the customers' realistic decisions will impact the price in (\ref{eq:price}) and change their participation levels. In Fig.~\ref{fig:hist1900}, some customers' mixed strategic components using PT are greater than those of EUT, i.e., Customers $3$-$6$, while Customers $1$ and $2$ have a lower participation probability using PT. If four PT customers, based on their loads between $18$:$00$ and $24$:$00$, decide to shift more load at $19$:$00$, their PT demands will be less than EUT. Then, compared to EUT, the PT prices of Customers $1$ and $2$ will increase because of the changing load fraction in (\ref{eq:price}). Because of the increasing prices, Customers $1$ and $2$'s costs increase and they will be less interested in participating at $19$:$00$. Indeed, there will exist some customers, such as Customers $1$ and $2$, who want to decrease their participation at $19$:$00$ under PT as shown in Fig.~\ref{fig:histP1}. Then, the overall demand increases and Customers $3$-$6$ obtain a decreasing pricing in (\ref{eq:price}) which will make these four customers more likely to participate at 19:00. 

\begin{figure}[!t]
 \begin{center}
 \vspace{-0.2cm}
  \includegraphics[width=8cm]{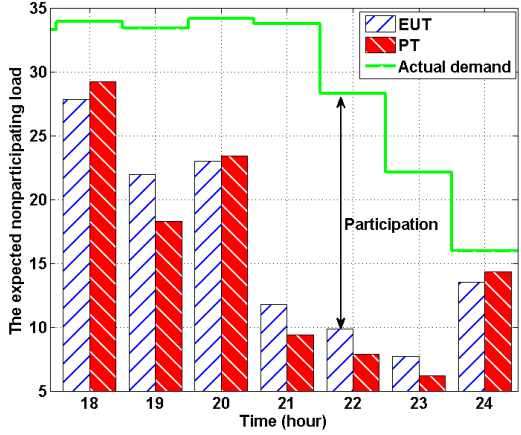}
 \vspace{-0.5cm}
   \caption{\label{fig:nonload} The expected nonparticipating load for the six customer game under mixed strategies using both EUT and PT over $24$ hours, when all customers have the same $\alpha=0.7$.}
\end{center}\vspace{-0.9cm}
\end{figure}

Fig.~\ref{fig:nonload} shows the expected nonparticipating load profile in the proposed DSM game as time varies. We choose the same customers as in Fig.~\ref{fig:histP1} with the distortion parameters are set to $\alpha = 0.7$ for all customers. The actual demand is the summation of initial demands in (\ref{eq:actuald}). In Fig.~\ref{fig:nonload}, the nonparticipating load is the minimum expected load that the power company must supply, while the participating load represents the load that can be partly shifted based on each individual customer's action in (\ref{eq:redde}). Here, all customers can start the game from $18$:$00$ and the expected nonparticipating loads between EUT and PT are different. On the one hand, the difference between EUT and PT during $18$:$00$ and $20$:$00$ is due to the change in the customers' decisions, as previously shown in Fig.~\ref{fig:hist1900}. On the other hand, between $21$:$00$ and $23$:$00$, the customers nonparticipating load using PT is always less than that using EUT. Indeed, if we translate the proposed game into a ``participate or not participate'' game, $21$:$00$ can be used to distinguish two such strategies, and customer behaviors before $21$:$00$ can impact their participation at a later time. However, the fact that the nonparticipation level for PT is less than that for EUT after $21$:$00$ directly relates to the choice of a distortion parameter $\alpha$. In particular, a small deviation from the rational strategy for $\alpha=0.7$ leads to an increased competition between the customers due to the fact that a weighted observation increases the costs in (\ref{eq:utility}). Such small deviation represents a case in which a customer participates in DSM but does not trust its view of the opponents' strategy or the information received from the power company. Thus, a slight deviation from the rational path causes increasing costs and customers are more apt to shift their loads under PT to decrease the impact of being non-rational, compared to EUT. As a result, after $21$:$00$, as seen in Fig.~\ref{fig:nonload}, the PT nonparticipating load will be less than that for EUT. At $24$:$00$, the power company needs to deal with the remaining load as defined in (\ref{eq:later2}).

\begin{figure}[!t]
 \begin{center}
 \vspace{-0.2cm}
  \includegraphics[width=8cm]{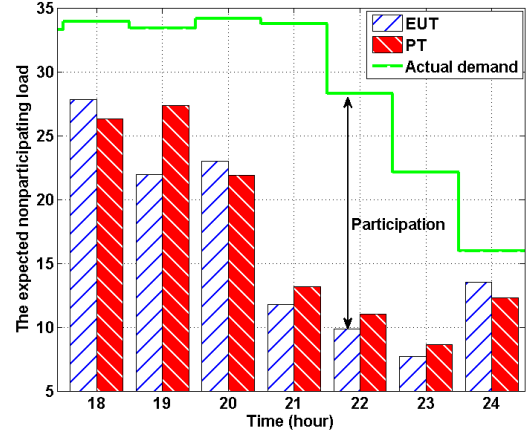}
 \vspace{-0.5cm}
   \caption{\label{fig:nonloaddifal} The expected nonparticipating load for the six customer game under mixed strategies using both EUT and PT over $24$ hours, when customers have different values of $\alpha$.}
\end{center}\vspace{-0.9cm}
\end{figure}

Compared to Fig.~\ref{fig:nonload}, Fig.~\ref{fig:nonloaddifal} shows the expected nonparticipating load profile using different distortion parameters, as time varies. The distortion parameter $\alpha$ in (\ref{eq:weight}) allow us to measure how a customer perceives the actions of its opponents. A large $\alpha$ implies a small distortion, while a small $\alpha$ represents an excessively subjective perception. In particular, we choose $\alpha=[0.5\ 0.5\ 0.2\ 0.1\ 0.1\ 0.1]^T$. In this figure, we can see that, when some customers have a very irrational observation of their opponents, the PT nonparticipating load between $21$:$00$ and $23$:$00$ will be higher than EUT. This implies that, in reality, if some customers deviate significantly from their rational strategies (for example, a customer forgets to assist the power company in load shifting), the power company will not be able to shift the total load predicted by the rational, objective model. Thus, the power company can use the distortion parameter to decide on how to design its DSM scheme.

\begin{figure}[!t]
 \begin{center}
 \vspace{-0.2cm}
  \includegraphics[width=8cm]{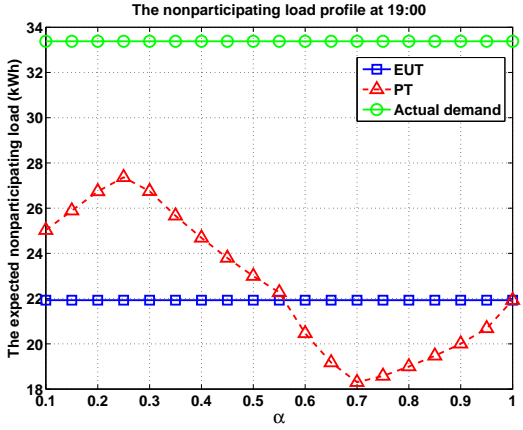}
 \vspace{-0.5cm}
   \caption{\label{fig:alpha1900} The expected nonparticipating load of all customers at $19$:$00$ as $\alpha$ varies.}
\end{center}\vspace{-0.9cm}
\end{figure}

Fig.~\ref{fig:alpha1900} shows the expected nonparticipating load at $19$:$00$, as the weighting effect parameter $\alpha$ varies. In this figure, we can see that the expected nonparticipating load under EUT is $65.7\%$ of the total load and, the nonparticipating load under PT is less than EUT when $\alpha>0.56$. This is because the majority of PT customers have more interest in participating at $19$:$00$, as shown in Fig.~\ref{fig:hist1900}. Thus, the power company can shift more load in practice, compared to EUT. Also, this figure shows that there exists a distortion threshold, such that, if $\alpha$ is greater (smaller) than the threshold, PT customers will have lower (higher) nonparticipating loads than EUT cases. A large distortion parameter, or a small deviation from EUT, yields an increased competition thus raising the costs to the customers. Consequently, the customers will become risk seeking and more apt to shift their loads and decrease their costs. Thus, the increasing PT costs will force the majority to shift more loads, compared to EUT. However, a small distortion parameter, or a large deviation from EUT, will lead to highly distorting behavior from the customers which will lead to increasingly high competition and decreasing participation, as customers become extremely risk averse and unwilling to participate in the DSM process. In a nutshell, Fig.~\ref{fig:alpha1900} shows that a small deviation from EUT may be beneficial for the power company as it increases customers' participation. In contrast, a significant deviation from EUT will inevitably lead to highly risk averse behavior which will prevent most customers from participating; thus yielding detrimental results for the grid and preventing the operator from reaping the benefits of DSM.

\begin{figure}[!t]
 \begin{center}
 \vspace{-0.2cm}
  \includegraphics[width=8cm]{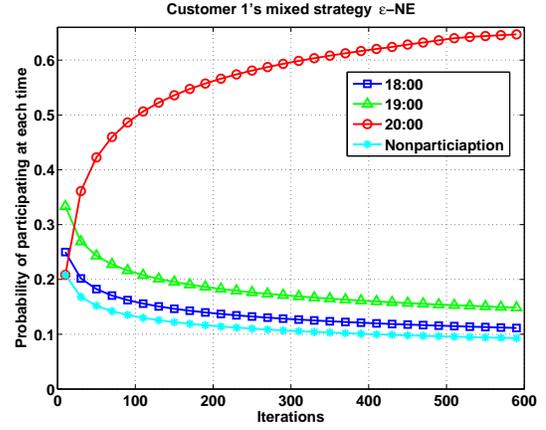}
 \vspace{-0.5cm}
   \caption{\label{fig:iter} The probability performance of mixed strategy for six customers based on EUT and PT.}
\end{center}\vspace{-0.2cm}
\end{figure}

In Fig.~\ref{fig:iter}, we show all PT mixed strategies of Customer $1$ (corresponding to Fig.~\ref{fig:histP1}), as the number of iterations increases. The proposed algorithm can be shown to converge \emph{sublinearly} by studying its rate of convergence which is defined as $R=\lim\limits_{k \rightarrow \infty}\frac{|\boldsymbol{p}^{(k+1)}-\boldsymbol{p}^*|}{|\boldsymbol{p}^{(k)}-\boldsymbol{p}^*|}$. Here, we mainly test the convergence of the proposed algorithm in (\ref{eq:algo}). In particular, Customer $1$'s initial probability is $\boldsymbol{p}_1^{(0)}=[0.2500\ 0.3333\ 0.2083\ 0.2083]^T$ and it converges to $\boldsymbol{p}_1^{*}=[0.1110\ 0.1480\  0.6484\ 0.0925]^T$. From Fig.~\ref{fig:histP1}, due to the low hourly costs, Customer $1$ will increase its participation at $20$:$00$, corresponding to the highest EUT frequency. In Fig.~\ref{fig:iter}, since the initial probability at $20$:$00$ is small, i.e., $p_1^{(0)}(a_3)=0.2083$, Customer $1$ will choose its best strategy and update $v_1(a_3)=1$ in (\ref{eq:FPaction}). Thus, we can see that the participating probability at $20$:$00$ increases, as the iteration $k$ increases. Similarly, because of the large initial probabilities at other strategies, Customer $1$ will decrease underlying participation via updating $v_1(a_1)=v_1(a_2)=v_1(a_4)=0$ in (\ref{eq:FPaction}). Also, Table~\ref{tab:multi} shows Customer $1$'s all mixed $\epsilon$-Nash equilibria under both EUT and PT. Using different initial probability vectors $\boldsymbol{p}^{(0)}$, the proposed algorithm in (\ref{eq:algo}) will reach different Nash equilibria, and the differences between EUT and PT at two different Nash equilibria are not the same.

\begin{table}[!t]\vspace{-0cm}
\scriptsize
  \centering
  \caption{
    \vspace*{-0em} Customer $1$'s All Mixed $\epsilon$-NE under both EUT and PT}\vspace*{-0.3cm}
\begin{tabular}{|c|c|c|}
\hline
 & EUT & PT  \\ [0.5ex]\hline
$1$ & $[0.1667\ 0.2222\ 0.4722\ 0.1389]^T$ & $[0.1110\ 0.1480\ 0.6484\ 0.0925]^T$ \\\hline
$2$ & $[0.2667\ 0.1333\ 0.5333\ 0.0667]^T$ & $[0.1776\ 0.0888\ 0.6891\ 0.0444]^T$ \\\hline
$3$ & $[0.0392\ 0.1569\ 0.4509\ 0.3530]^T$ & $[0.0261\ 0.1045\ 0.6343\ 0.2351]^T$ \\\hline
$4$ & $[0.1159\ 0.2609\ 0.5072\ 0.1159]^T$ & $[0.0772\ 0.1738\ 0.6718\ 0.0772]^T$\\\hline
$5$ & $[0.2222\ 0.0889\ 0.4666\ 0.2222]^T$ & $[0.1480\ 0.0592\ 0.6447\ 0.1480]^T$\\\hline
\end{tabular}
\label{tab:multi}\vspace{-0.7cm}
\end{table}

\vspace{-0.1cm}

\section{Conclusions}\label{sec:conc}\vspace{-0cm}
In this paper, we have introduced a novel approach for
studying the problem of demand-side management. It is of interest to factor in explicitly the behavior of customers in DSM. In particular, we have developed a game-theoretic approach, based on prospect theory, using which each player subjectively observes other players' actions and determines its own actions so as to minimize a cost function that captures the electricity cost over $24$ hours. Then, we have proposed an algorithm and have proved that it reaches a mixed $\epsilon$-NE. Simulation results have shown that deviations from classical, objective game-theoretic DSM mechanisms can lead to unexpected results and loads on the grid, depending on the level of the customers' subjective perceptions of each others' actions. Therefore, whether a customer participates in DSM or not, depends on this level of rationality of the customer. In a nutshell, the results of this paper have provided important insights into the factors underlying the modest participation in DSM schemes observed in real-world smart grid systems.

\vspace{-0.1cm}

\appendix
\vspace{-0.1cm}

\subsection{Proof of Theorem~\ref{th:cov2}}

\vspace{-0.1cm}
\begin{proof}
The convergence of the beliefs to a mixed $\epsilon_p$-equilibrium is shown in Theorem~\ref{th:cov1}. Based on Theorem~\ref{th:cov1}, the convergence of a mixed $\epsilon$-NE under EUT is a known result for SFP in~\cite{fudenberg1995consistency}. The following proof mainly focuses on PT and we prove this theorem by contradiction.

Suppose that $\{\boldsymbol{p}_k\}$ is an iterative process resulting from  the proposed algorithm that converges to a mixed strategy $\boldsymbol{p}^*$ (an $\epsilon_p$ equilibrium) after $k$ iterations. Also, assume a mixed NE $\boldsymbol{\delta}^*=\{\boldsymbol{\delta}^*_i,\boldsymbol{\delta}^*_{-i}\}$ is near to the fixed point $\boldsymbol{p}^*$. Based on contradiction, if the point $\boldsymbol{p}^*=\{\boldsymbol{p}^*_i,\boldsymbol{p}^*_{-i}\}$ is not an $\epsilon$-mixed NE, there must exists a strategy $p'_i(a'_i) \in \boldsymbol{p}^*_i$, such that {1)} $p_i(a_i)>0, p_i(a_i) \in \boldsymbol{p}^*$ (at least one mixed strategy of player $i$ is not zero), and {2)} \vspace{-0.1cm}
\begin{equation*}\vspace{-0.2cm}
\begin{split}
u_i^{\text{PT}}\biggl(a_i,\boldsymbol{p}^*_{-i}\biggr)>&u_i^{\text{PT}}\biggl(a'_i,\boldsymbol{p}^*_{-i}\biggr),\\
u_i^{\text{PT}}\biggl(a_i,\boldsymbol{\delta}^*_{-i}\biggr)+\epsilon_i\biggl(a_i,\boldsymbol{\delta}^*_{-i}\biggr)>&u_i^{\text{PT}}\biggl(a'_i,\boldsymbol{\delta}^*_{-i}\biggr)+\epsilon_i\biggl(a'_i,\boldsymbol{\delta}^*_{-i}\biggr),\\
\end{split}
\end{equation*}
where $u_i(a_i,\boldsymbol{p}^*_{-i})$ is the expected utility of pure strategy $a_i$ and $\epsilon_i(a_i,\boldsymbol{\delta}^*_{-i})$ is the utility difference between NE and $\epsilon$-NE. Here, since the iterative mixed strategy decreases as the number of iterations $n$ increases ($n \le k$), the utility distance $\epsilon_s$ of a pure strategy between two neighboring iterations/steps must be less than a value after a certain iteration $k$. In particular, if the proposed algorithm converge to a fixed $\epsilon_p$ equilibrium at iteration $k$ and it is not an $\epsilon$-NE in belief (utility), we can find an $\epsilon_s$, such that\vspace{-0.1cm}
\begin{equation*}\vspace{-0.1cm}
\begin{split}
0<\epsilon_s<&\frac{1}{2}\biggl|u_i^{\text{PT}}(a_i,\boldsymbol{p}^*_{-i})-u_i^{\text{PT}}(a'_i,\boldsymbol{p}^*_{-i})\biggr|,\\
0<2\epsilon_s<&\biggl(u_i^{\text{PT}}(a_i,\boldsymbol{\delta}^*_{-i})+\epsilon_i(a_i,\boldsymbol{\delta}^*_{-i})\biggr)\\
&-\biggl(u_i^{\text{PT}}(a'_i,\boldsymbol{\delta}^*_{-i})+\epsilon_i(a'_i,\boldsymbol{\delta}^*_{-i})\biggr),
\end{split}
\end{equation*}
where $u_i^{\text{PT}}(a_i,\boldsymbol{\delta}^*_{-i})=\sum_{\boldsymbol{a} \in \mathcal{A}} u_i(a_i,\boldsymbol{a}_{-i}^*) w_i(\boldsymbol{\delta}^*_{-i})$. For $n \ge k$, the proposed algorithm process satisfies\vspace{-0.2cm}

\begin{equation*}\label{eq:ineq}\vspace{-0.1cm}
\begin{split}
u_i^{\text{PT}}(a_i,\boldsymbol{p}_{-i}^n)=&\sum_{\boldsymbol{a} \in \mathcal{A}} u_i(a_i,\boldsymbol{a}_{-i}^n) w_i(\boldsymbol{p}_{-i}^n)\\
\ge &\biggl(\sum_{\boldsymbol{a} \in \mathcal{A}} u_i(a_i,\boldsymbol{a}_{-i}^*) w_i(\boldsymbol{p}^*_{-i})\biggr)- \epsilon_s\\
=&\biggl(\sum_{\boldsymbol{a} \in \mathcal{A}} u_i(a_i,\boldsymbol{a}_{-i}^*) w_i(\boldsymbol{\delta}^*_{-i})\biggr)+\epsilon_i(a_i,\boldsymbol{\delta}_{-i}^*)- \epsilon_s\\
>&\biggl(\sum_{\boldsymbol{a} \in \mathcal{A}} u_i(a'_i,\boldsymbol{a}_{-i}^*) w_i(\boldsymbol{\delta}^*_{-i})\biggr)+\epsilon_i(a'_i,\boldsymbol{\delta}_{-i}^*)+ \epsilon_s\\
=&\biggl(\sum_{\boldsymbol{a} \in \mathcal{A}} u_i(a'_i,\boldsymbol{a}_{-i}^*) w_i(\boldsymbol{\delta}^*_{-i})\biggr)+ \epsilon_s\\
\ge &\sum_{\boldsymbol{a} \in \mathcal{A}} u_i(a'_i,\boldsymbol{a}_{-i}^n) w_i(\boldsymbol{p}^n_{-i})\\
=&\ u_i^{\text{PT}}(a'_i,\boldsymbol{p}_{-i}^n).
\end{split}
\end{equation*}
Thus, player $i$ would not play $a_i$ but $a'_i$ after the $n$th iteration, and we will have $p_i(a_i)=0$ and $\boldsymbol{w}_{-i}(p_i(a_i))=0$ (the other player's observation on $p_i$). Here, we have proved that, $p_i(a_i)=0$ contradicts the initial assumption $p_i(a_i)>0$; thus the theorem follows.
\end{proof}

\def\baselinestretch{0.8}
\bibliographystyle{IEEEtran}
\bibliography{references}

\end{document}